\documentclass[
reprint,
onecolumn,
aps,
pra,
longbibliography,
nofootinbib,
]{revtex4-2}

\usepackage[T2A]{fontenc}
\usepackage[utf8]{inputenc}
\usepackage[english,russian]{babel}

\usepackage{xcolor}      
\usepackage{graphicx}
\usepackage{dcolumn}
\usepackage{bm}
\usepackage{amsmath}
\usepackage{braket}
\usepackage{amsfonts}
\usepackage{bbm}
\usepackage{physics}
\usepackage{multirow}
\usepackage{tikz-cd}
\usepackage{tikz}
\usepackage{amsthm}

\usetikzlibrary{arrows.meta, bending}

\usepackage[most]{tcolorbox}

\newtheorem{theorem}{Theorem}[section]

\usepackage[unicode,colorlinks=true,citecolor=blue,urlcolor=blue]{hyperref}

\begin{document}

\selectlanguage{english}

\title{Spin--Boson Mappings in the Formalism of \texorpdfstring{$f$}{f}-Deformations}

\author{Vladimir A. Orlov}
\email{vorlovac@outlook.com}
\affiliation{Russian Quantum Center, Skolkovo, Moscow 121205, Russia}
\affiliation{Moscow Institute of Physics and Technology, Dolgoprudny, Moscow Region 141701, Russia}

\author{Liubov A. Markovich}
\affiliation{Russian Quantum Center, Skolkovo, Moscow 121205, Russia}

\author{Andrey V. Mikheyenkov}
\affiliation{Vereshchagin Institute for High Pressure Physics of the Russian Academy of Sciences, Troitsk, Moscow 108840, Russia}
\affiliation{Moscow Institute of Physics and Technology, Dolgoprudny, Moscow Region 141701, Russia}
\affiliation{National Research Center ``Kurchatov Institute'', Moscow 123182, Russia}

\author{Vladimir I. Man'ko}
\affiliation{Russian Quantum Center, Skolkovo, Moscow 121205, Russia}
\affiliation{P. N. Lebedev Physical Institute of the Russian Academy of Sciences, Moscow 119991, Russia}
\affiliation{Moscow Institute of Physics and Technology, Dolgoprudny, Moscow Region 141701, Russia}

\date{\today}

\begin{abstract}
We develop a unified algebraic approach to spin--boson transformations based on the formalism of $f$-deformed oscillators. In the single-mode case, we show that the Holstein--Primakoff and Dyson--Maleev transformations, together with the interpolating $\alpha$-family, arise as different factorizations of the same algebraically determined object. The standard spin--boson mappings can thus be interpreted as realizations of a common structure, which makes it possible to clearly separate the exact algebraic content on the physical subspace from effects associated with non-Hermiticity, the choice of metric, and extensions beyond the physical subspace. In the two-mode case, the same approach yields both deformed versions of the Jordan--Schwinger transformation and new exact two-mode realizations. Our results provide a unified description of known spin--boson transformations and naturally lead to new bosonic representations.
\\
\noindent\textbf{Keywords:} spin--boson transformations, $f$-deformed oscillators, $\mathfrak{su}(2)$ algebra, Holstein--Primakoff transformation, Dyson--Maleev transformation, Jordan--Schwinger transformation.
\end{abstract}

\maketitle
\renewcommand{\thefootnote}{\arabic{footnote})}

\selectlanguage{english}
\section{Introduction}

Spin--boson mappings (also referred to as spin--boson transformations) are important tools in quantum magnetism, quantum optics, many-body physics, and quantum computing~\cite{mandel1996optical, kolmer2024modeling, barberena2025generalized, omanakuttan2026holstein, dudinets2025circuit, bolsmann2023switching, hepp1973superradiant,wang1973phase,emary2003chaos,dimer2007proposed,dusuel2004finite,morrison2008dynamical,klein1991boson,ma2011quantum,kurucz2010multilevel, liu2026hybrid, descamps2024superselection, orlov2025discrete}. Their primary purpose is to map problems formulated in terms of spin operators obeying a nontrivial algebra onto equivalent problems in bosonic variables, for which a broad range of analytical and approximation techniques is available. This considerably simplifies the description of collective excitations, enables the use of quantum-field-theoretical methods, and facilitates the development of both analytical and numerical solution methods~\cite{Auerbach1994, mattis2012theory}.
\par The standard spin--boson transformations include the Holstein--Primakoff (HP), Dyson--Maleev (DM), and Jordan--Schwinger (JS) mappings~\cite{holstein1940field,dyson1956general,dyson1956thermodynamic,maleev1958scattering,Jordan1935,Schwinger1952}. Historically, these transformations were introduced independently to address different, albeit closely related, problems. In standard textbooks on magnetism and condensed-matter physics, the HP transformation is usually either stated without derivation or justified by a direct comparison of matrix elements~\cite{kittel2018introduction, white2007quantum, madelung2012introduction, ziman1979principles}. Such a treatment obscures the common algebraic mechanism by which spin--boson transformations emerge as particular realizations of a unified construction. In the original work~\cite{holstein1940field}, the HP transformation was constructed by exactly reproducing the matrix elements of spin operators in a Fock basis in which the boson occupation number counts deviations from the fully polarized state. Dyson's objective, by contrast, was to develop a convenient formalism for describing interacting spin waves in a ferromagnet~\cite{dyson1956general, dyson1956thermodynamic}. His approach replaces physical spin-wave states, which are generally nonorthogonal, with orthogonal bosonic states of an effective model. The spin operators retain their correct physical content, while their simpler polynomial representation in the enlarged bosonic space becomes non-Hermitian. Building on this construction, Maleev proposed a modified form of the transformation that is better suited to practical calculations~\cite{maleev1958scattering}.
\par The relationship among these approaches was not initially apparent and only became clear gradually. Moreover, at an early stage, the Dyson and Holstein--Primakoff methods yielded noticeably different results, leading one of the transformations to be regarded as incorrect. The first explanation of these discrepancies, which attributed them to an improper approximation of the relevant series, was given by Oguchi~\cite{oguchi1960theory}. An important step toward unification was made by Dembinski, who showed that spin--boson transformations must be considered together with the choice of a physical subspace and the corresponding metric~\cite{dembinski1964dyson}. Even within this framework, however, standard accounts typically list separate representations rather than derive them from a single algebraic mechanism.
\par The JS transformation occupies a special place in this family because it employs two independent bosonic modes~\cite{Jordan1935,Schwinger1952} and plays a central role in auxiliary-boson theories~\cite{arovas1988functional, kotliar1986new}. Subtle issues arise already at a basic level and are often left implicit in the applied literature: the relation between the finite-dimensional spin space and the infinite-dimensional Fock space, the behavior of the representations outside the physical subspace, and questions involving the metric and quasi-Hermiticity~\cite{scholtz1992quasi,mostafazadeh2001pseudo,mostafazadeh2010pseudo}. In practical calculations, moreover, not only the algebraic validity of a representation on the physical subspace but also its stability under expansions, truncations, and numerical implementation is essential.
\begin{figure}[h]
    \centering
    \includegraphics[width=0.3125\linewidth]{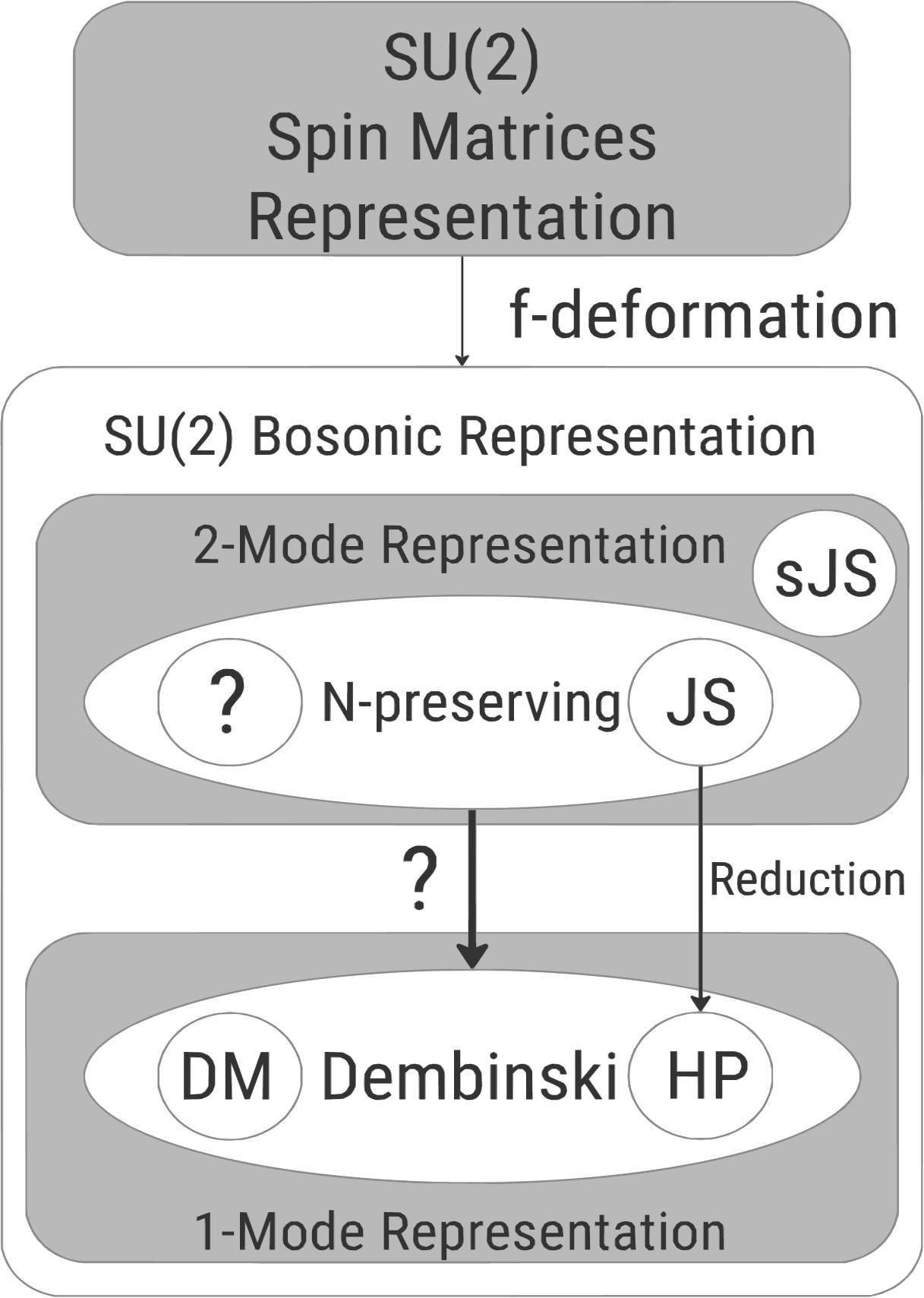}
\caption{Schematic overview of the general structure of spin--boson realizations of the $\mathfrak{su}(2)$ algebra within the $f$-deformation formalism.
The upper block corresponds to the standard matrix spin representation, from which the $f$-deformation approach generates a general class of bosonic realizations. This class contains both two-mode and single-mode representations. In the two-mode case, the diagram shows the standard $N$-conserving Jordan--Schwinger (JS) class, the squeezed Jordan--Schwinger transformation (sJS), and the possibility of more general realizations beyond the standard class. In the single-mode case, the Holstein--Primakoff (HP) and Dyson--Maleev (DM) transformations and the interpolating $\alpha$-family (the Dembinski family) arise as different factorizations of the same algebraically determined structure. The arrow labeled \emph{Reduction} denotes the reduction of the two-mode block to the single-mode physical subspace, while the question mark indicates additional factorization freedom and the possibility of new representations.
}
\label{fig:1}
\end{figure}
\par
The broader context is provided by generalized boson mappings, originating in the classic work of Belyaev and Zelevinsky~\cite{beliaev1962anharmonic} and systematically reviewed by Klein and Marshalek~\cite{klein1991boson}. Within this framework, one naturally asks which part of a spin--boson representation is determined by the $\mathfrak{su}(2)$ algebra itself and which part reflects the freedom to choose the factorization, normalization, metric, and extension beyond the physical subspace. This distinction is fundamental both to understanding the standard transformations and to constructing new bosonic realizations of other algebras. Thus, spin--boson transformations arose from different historical motivations, and their deep internal relationship became apparent only retrospectively. Existing approaches emphasize this relationship but do not provide a unified algebraic mechanism from which these transformations follow as special cases. The completeness of the known class of representations and the possibility of constructing new realizations therefore remain open questions. The formalism of $f$-deformed oscillators provides a natural language for addressing these questions: canonical bosonic operators are modified by functions of the particle-number operator~\cite{man1993physical, de1996nonlinear, man1998nonlinearity}. This framework generalizes the formalism of $q$-deformations~\cite{drinfel1985hopf, jimbo1985q, arik1976hilbert, biedenharn1989quantum, macfarlane1989q, kulish1983quantum, chang1992generalized, quesne2002new, csenay2024study, serdouk2026fractional, lotfizadeh2020construction, huang2025gauge, razumov2023poincare, Korybut2017TMP, BorzovDamaskinsky2011TMP, li2022symmetries, BorzovDamaskinsky2017TMP} and is widely used in mathematical physics and in the theory of nonlinear coherent states~\cite{kullock2016towards, chatterjee2024analyzing, man2002alternative, dudinets2017tomography, man2010moyal}. Nevertheless, its potential in the context of spin--boson transformations remains largely unexplored~\cite{momeni2022nonlinear}.
\par
The aim of this work is to provide a unified algebraic derivation of the standard spin--boson transformations from an $f$-deformed ansatz and thereby embed the known representations in a common framework (Fig.~\ref{fig:1}). We show that, in the single-mode case, the $\mathfrak{su}(2)$ commutation relations lead to a universal finite-difference equation for the invariant product of deformation functions, whose solution is uniquely fixed within the physical subspace. The standard transformations emerge as different factorizations of this object. This viewpoint identifies the $\alpha$-family of realizations, separates the exact block structure from its extensions beyond the physical subspace, and clarifies the role of non-Hermiticity and metric corrections.

We then extend the formalism to two-mode representations. We show that the JS transformation and its deformations admit an analogous description in terms of factorization freedom on a block with a fixed total boson number. We also consider realizations beyond the JS class, in which the total boson number need not be conserved even though the $\mathfrak{su}(2)$ algebra is preserved. A new class of deformed JS transformations is introduced. Thus, $f$-deformations provide both a unification of known representations and a systematic route to constructing new ones.

We pay particular attention to computational aspects. The accuracy of a calculation is determined not only by the choice of mapping but also by how operator functions of the discrete particle-number operator are treated. In this setting, finite-difference (Newton-series) expansions provide a natural tool: they respect the discrete structure of the physical subspace and allow normally ordered expressions to be controlled systematically~\cite{kubo1952spin,vogl_resummation_2020,konig2021newtonseriesexpansionbosonic}. This makes it possible to clearly distinguish properties fixed by the algebra from effects caused by the choice of representation, approximation, and extension beyond the physical subspace.

The paper is organized as follows. In Sec.~\ref{sec:one_mode}, we review known results for single-mode transformations and discuss their constraints, sources of error, and applications in detail. An analogous review of two-mode transformations is given in Sec.~\ref{sec:two_mode}. In Secs.~\ref{sec:fosc} and~\ref{sec:two_mode_js}, application of the $f$-deformation approach yields a general class of bosonic realizations within which single-mode and two-mode representations emerge naturally. In the single-mode case, this general class contains the HP and DM transformations and the interpolating $\alpha$-family, which are interpreted as different factorizations of the same algebraically determined object. In the two-mode case, we identify the standard $N$-conserving class that contains the JS transformation. Section~\ref{sec:beyond_js} demonstrates the existence of more general realizations that need not conserve the total boson number. We also introduce a new family, the squeezed Jordan--Schwinger transformation.

\selectlanguage{english}
\section{Single-Mode Transformations}
\label{sec:one_mode}
We now consider single-mode bosonic realizations of the $\mathfrak{su}(2)$ algebra, in which the spin operators are expressed in terms of a single bosonic mode.
Let $\hat S^\pm$ and $\hat S^z$ denote spin operators defined at a lattice site for an arbitrary spin $S\in\{\tfrac12,1,\dots\}$ that realize a representation of $\mathfrak{su}(2)$ with the commutation relations
\begin{align} \label{eq:su2_comm_ch3_1}
&[\hat S^z,\hat S^\pm]=\pm \hat S^\pm,\\
&[\hat S^+,\hat S^-]=2\hat S^z,
\label{eq:su2_comm_ch3}
\end{align}
Throughout this work, we use the conventional physics notation
$\mathfrak{su}(2)$ for the ladder-operator relations above.
Algebraically, these relations are those of the complexified algebra
$\mathfrak{su}(2)_{\mathbb C}\simeq\mathfrak{sl}(2,\mathbb C)$.
Hermiticity or quasi-Hermiticity conditions, when relevant for a
particular realization, are specified separately.

The standard Holstein--Primakoff (HP) transformation~\cite{holstein1940field} then reads
\begin{align}
\hat{S}^{+}&= \sqrt{2S}\left(1-\frac{\hat{n}}{2S}\right)^{\frac{1}{2}}\!\hat{a},\quad
 \hat{S}^{-}\!=\hat{a}^{\dagger}\sqrt{2S}\left(1-\frac{\hat{n}}{2S}\right)^{\frac{1}{2}}, \nonumber\\
 \hat{S}^{z}&=S-\hat{n},
\label{eq:HP_0}
\end{align}
where $\hat a$ and $\hat a^\dagger$ are the canonical bosonic annihilation and creation operators satisfying $[\hat a,\hat a^\dagger]=1$, and $\hat n=\hat a^\dagger\hat a$ is the boson-number operator. The Dyson--Maleev (DM) transformation~\cite{dyson1956thermodynamic, dyson1956general, maleev1958scattering} has the form
\begin{equation}
\hat{S}^{+}=\sqrt{2S}\left( 1-\frac{\hat{n}}{2S}\right)\!\hat{a},\quad\!\!\!\!
\hat{S}^{-}\!\!=\hat{a}^{\dagger }\sqrt{2S},\quad\! \hat{S}^{z}=S-\hat{n}.
\label{eq:DM_0}
\end{equation}
A dual form of the DM transformation is also used in the literature:
\begin{equation}
\hat{S}^{+}=\sqrt{2S}\hat{a},\quad\!\!\!\!
\hat{S}^{-}=\!\hat{a}^{\dagger }\sqrt{2S}\left( 1-\frac{\hat{n}}{2S}\right),\quad\!\!\!\hat{S}^{z}=S-\hat{n}.
\label{eq:DM_dual}
\end{equation}
For Hamiltonians depending only on combinations such as
$\hat S^+\hat S^-$, $\hat S^-\hat S^+$, and $\hat S^z$,
the representations \eqref{eq:DM_0} and \eqref{eq:DM_dual} lead to the same explicit bosonic form, since
these combinations are fixed by the algebra. More general terms,
such as those involving $\hat S^x$ or unbalanced products of ladder
operators, generally lead to different bosonic expressions.
Nevertheless, on the physical subspace the two exact representations
are related by a similarity transformation and therefore have the
same spectrum. Differences may arise after truncation, approximation,
or extension beyond the physical subspace.
\par For antiferromagnetic systems, one uses the antiferromagnetic DM mapping based on a decomposition of the lattice into two sublattices, $A$ and $B$~\cite{ueda2010supersolid}:
\begin{align}
\label{eq:ADM}
\hat S_l^{+} &= \sqrt{2S}\,\hat a_l, &
\hat S_l^{-} &= \sqrt{2S}\,\hat a_l^{\dagger} \left(1-\frac{\hat a_l^{\dagger}\hat a_l}{2S}\right), \quad l\in A, \notag \\
\hat S_m^{+} &= \sqrt{2S}\,\hat b_m^{\dagger}, &
\hat S_m^{-} &= \sqrt{2S} \left(1-\frac{\hat b_m^{\dagger}\hat b_m}{2S}\right)\hat b_m, \quad m\in B, \\
\hat S_l^{z} &= S-\hat a_l^{\dagger}\hat a_l, &
\hat S_m^{z} &= -S+\hat b_m^{\dagger}\hat b_m. \notag
\end{align}
In this case, independent bosonic operators are assigned to the spin operators on sublattices $A$ and $B$; they satisfy $[\hat a_l,\hat a_{l'}^\dagger]=\delta_{ll'}$ and $[\hat b_m,\hat b_{m'}^\dagger]=\delta_{mm'}$. The choice of vacuum is fixed by the quantization direction chosen for $S_m^z$, that is, by expanding about the values $\pm S$. This reflects the fact that, in the classical limit, the spins on sublattices $A$ and $B$ are oriented along the $+z$ and $-z$ axes, respectively. The antiferromagnetic version of the HP transformation was derived in Refs.~\cite{kubo1952spin, oguchi1960theory}.
\par
Finally, the interpolation between \eqref{eq:HP_0} and \eqref{eq:DM_0} is described by the Dembinski transformation~\cite{dembinski1964dyson}, also known as the Cooke--Loly transformation~\cite{loly1971heisenberg, cooke1970heisenberg, garbaczewski1978method}:
\begin{eqnarray}
\hat{S}^{+} &=& \sqrt{2S}\left( 1-\frac{\hat{n}}{2S}\right) ^{\alpha }
\hat{a},\quad \!\hat{S}^{-} = \hat{a}^{\dagger }\sqrt{2S}
\left( 1-\frac{\hat{n}}{2S}\right)^{1 - \alpha } \nonumber\\
\hat{S}^{z}&=&S-\hat{n},\quad 0\leq \alpha \leq 1
\label{eq:Dem_0}
\end{eqnarray}
For $\alpha=0$, we recover \eqref{eq:DM_dual}; for $\alpha=1/2$, \eqref{eq:HP_0}; and for $\alpha=1$, \eqref{eq:DM_0}. An alternative parametrization obtained through the replacement $\alpha\leftrightarrow 1-\alpha$ is also used in the literature; in this convention, the parameter $\alpha$ covers only one of the two variants of the DM transformation (see, e.g., Ref.~\cite{rudoy_bogoliubov-tyablikov_2011}).
\par It should be noted that Dembinski~\cite{dembinski1964dyson} did not give the explicit expression \eqref{eq:Dem_0}. Instead, his analysis of the choice of metric and the identification of the physical subspace singled out the three points \eqref{eq:HP_0}, \eqref{eq:DM_0}, and \eqref{eq:DM_dual}. The continuous power-law interpolation between them was later proposed heuristically by Loly~\cite{loly1971heisenberg} and is therefore often referred to as the Cooke--Loly transformation~\cite{garbaczewski1978method}. In particular, this interpolation has been used in Green-function-based diagrammatic approaches to motivate the choice of the non-Hermitian Dyson transformation~\cite{loly1971heisenberg, cooke1970heisenberg}.
\subsection{Constraints and Sources of Error in Single-Mode Transformations}

In the transformations \eqref{eq:HP_0}, \eqref{eq:DM_0}, and \eqref{eq:Dem_0}, the spin operators on the left-hand side act in the finite-dimensional Hilbert space $\dim\mathcal H_S=2S+1$, whereas the bosonic operators on the right-hand side act in the infinite-dimensional Fock space $\mathcal F$. This mismatch is resolved by observing that these mappings are in fact realized on the physical subspace $\mathcal H_{\mathrm{phys}}\subset\mathcal F$, with $\dim\mathcal H_{\mathrm{phys}}=2S+1$, selected by the constraint $0\le n\le 2S$. This establishes the one-to-one correspondence $\ket{S,m}\leftrightarrow\ket{n=S-m}$, where $\hat S^z|S,m\rangle=m|S,m\rangle$.
\par
The presence of this constraint is explicit in the HP transformation \eqref{eq:HP_0}, where the radicand $1-\frac{\hat n}{2S}$ must be nonnegative. For the DM transformation \eqref{eq:DM_0}, and for the more general transformation \eqref{eq:Dem_0}, the need for the constraint is less apparent. It must nevertheless be imposed in all calculations, since violating it leads to incorrect results, particularly in numerical implementations\footnote{States with $n>2S$ are commonly called unphysical or spurious states~\cite{vogl_resummation_2020, geyer2004non}.}. Similar problems arise if the operator $(1-\frac{\hat n}{2S})^{\frac12}$ in the HP transformation is expanded in a series without respecting this constraint. Other errors associated with expansions of operator functions are discussed in Sec.~\ref{subsec:exactness_vs_approx_ch3}.
\par A strictly finite-dimensional description based on the constraint $0\le n\le 2S$ is conceptually correct but often technically inconvenient. Practical calculations usually consider the large-spin, low-magnon-number regime
$\langle \hat n\rangle \ll 2S$,
which corresponds to the low-temperature spin-wave approximation. In this regime, the total weight of states with $n>2S$ is suppressed,
and within the usual perturbative spin-wave expansion their contributions
can be neglected at the orders normally retained~\cite{oguchi1960theory}. In particular, the contribution of states with $n>2S$ becomes negligible in the spin-wave limit~\cite{ma2011quantum}.
\par Spin--boson transformations are used far beyond magnon theory. In particular, they appear in models of collective light--matter interaction, such as the Dicke model and its generalizations, in collective many-body spin models of the Lipkin--Meshkov--Glick type, and in nuclear physics~\cite{hepp1973superradiant,wang1973phase,emary2003chaos,dimer2007proposed,dusuel2004finite,morrison2008dynamical,klein1991boson,ma2011quantum,kurucz2010multilevel}. These transformations also play an important role in quantum computing and hybrid quantum systems~\cite{liu2026hybrid, descamps2024superselection}, where they provide a bridge between continuous-variable and discrete-variable systems. A detailed discussion of these applications lies beyond the scope of the present work and can be found in Ref.~\cite{orlov2025discrete}.
\par In these settings, however, the spin-wave approximation is often inapplicable because the relevant spin and boson occupation numbers become comparable and truncation errors become significant. It is then essential to distinguish the exact realization on $\mathcal H_{\mathrm{phys}}$ from approximations outside the physical subspace. This requires the constraint $0\le n\le 2S$ to be imposed explicitly and brings us back to a formulation based on identifying the physical subspace. Subject to this constraint, Eqs.~\eqref{eq:HP_0}, \eqref{eq:DM_0}, and \eqref{eq:Dem_0} should be understood as exact: they define an isomorphism of $\mathfrak{su}(2)$ representations between
the finite-dimensional spin space $\mathcal H_S$ and the corresponding physical bosonic
subspace $\mathcal H_{\mathrm{phys}}$. The extension of the bosonic operators from the physical subspace to the full Fock space is, however, not unique.
\par A key feature of the DM transformation, and more generally of the Dembinski transformation, is the non-Hermiticity of the representation: $\hat S^+\neq(\hat S^-)^\dagger$. Consequently, matrix elements evaluated in the bosonic representation do not coincide with the canonical matrix elements of the spin operators. In the spin basis $|S,m\rangle$, one has
\begin{align}
\hat S^\pm|S,m\rangle=\sqrt{(S\mp m)(S\pm m+1)}|S,m\pm 1\rangle.
\end{align}
Upon identifying $|n\rangle\leftrightarrow|S,m=S-n\rangle$, the physically correct matrix elements in the $n$ basis must be
\begin{align} \langle n-1|S^+|n\rangle=\sqrt{n}\sqrt{2S-n+1}, \nonumber \\ \langle n+1|S^-|n\rangle=\sqrt{n+1}\sqrt{2S-n}. \end{align}
These are precisely the matrix elements produced by the HP transformation. By contrast, with the standard Fock-space inner product, the DM transformation yields
\begin{align}
\langle n-1| S^+|n\rangle=\frac{2S-n+1}{\sqrt{2S}}\sqrt{n}, \nonumber\\ \langle n+1| S^-|n\rangle=\sqrt{2S}\sqrt{n+1},
\end{align}
which do not coincide with the canonical spin expressions. This discrepancy arises because, on $\mathcal H_{\mathrm{phys}}$, the standard inner product must be replaced by a metric-modified one. Specifically, one introduces a positive metric operator $\hat\eta$ such that Hermitian conjugacy is restored with respect to the inner product $\bra{\phi}\ket{\psi}_\eta = \mel{\phi}{\hat\eta}{\psi}$. In this sense, the DM transformation defines a quasi-Hermitian rather than an ordinary Hermitian representation~\cite{mostafazadeh2001pseudo, mostafazadeh2010pseudo}. Incorporating the appropriate metric restores the physically correct matrix elements and ensures that the Hamiltonian has a real spectrum.

The quasi-Hermiticity of the DM transformation is well known in the mathematical literature and must be taken into account in accurate numerical calculations~\cite{mostafazadeh2010pseudo, scholtz1992quasi}\footnote{Quasi-Hermiticity and metric operators were already used, in effect, in the original works of Dyson and Dembinski~\cite{dyson1956thermodynamic, dyson1956general, dembinski1964dyson} and were also discussed in Ref.~\cite{mills1966problems}. The corresponding conceptual framework and modern terminology, however, were developed considerably later.}. In particular, the transformations \eqref{eq:DM_0} and \eqref{eq:DM_dual} are related by a similarity transformation on the
physical subspace but differ in their extensions beyond the physical subspace. Consequently, under truncation, approximation, or numerical implementation, they may produce different errors and approximate Hamiltonians with different structures. Nevertheless, in many physical problems the effect of the nontrivial metric is asymptotically small. For a single spin $S=\tfrac12$, the metric operator is the identity, while in the spin-wave limit its diagonal elements satisfy $\eta_n=1-n(n-1)/(4S)+O(n^4/S^2)$. This asymptotic approach to the identity explains why metric corrections are usually neglected in standard calculations of magnon theory. Beyond the spin-wave approximation, however, they may become important.\footnote{This also explains the alternative form of the DM transformation encountered mainly in the mathematical literature, $\hat S^+=(2S-\hat n)\hat a$,
$\hat S^-=\hat a^\dagger$~\cite{ivanov2008dyson}. This form is algebraically valid and preserves the commutation relations
because, after identifying the spin space with the physical bosonic
subspace, it is related to the standard Dyson--Maleev realization by an
invertible change of basis within the same finite-dimensional space. It is more compact and convenient for algebraic calculations; however, like the ``physical'' version, it yields matrix elements that differ from the standard ones and are restored only after a positive metric is introduced. From a physical perspective, its main drawback is that it does not simplify in the spin-wave limit and the corresponding metric does not approach the identity, making numerical implementation more difficult.}. A detailed investigation of the quasi-Hermitian nature of these transformations lies outside the scope of the present work. Below, metric-related questions are addressed only insofar as necessary for a correct interpretation of non-Hermitian spin--boson representations.

\selectlanguage{english}
\section{Two-Mode Transformations}
\label{sec:two_mode}
\par Another widely used spin--boson representation is the Jordan--Schwinger (JS) transformation, also known as the Schwinger boson representation~\cite{Jordan1935,Schwinger1952}, in which the $\mathfrak{su}(2)$ algebra is realized using two independent bosonic modes.

Let $\hat a$ and $\hat b$ denote two independent bosonic modes satisfying $[\hat a,\hat a^\dagger]=[\hat b,\hat b^\dagger]=1$, with all other commutators vanishing. Each spin projection is associated with a distribution of the boson number between the two modes: the $\hat a$ mode accounts for an increase in the spin projection, whereas the $\hat b$ mode accounts for a decrease. In particular, the configuration $n_b=0$, $n_a=2S$ is chosen as the reference state and corresponds to the state of maximal spin projection, $m=+S$. The spin--boson transformation then reads
\begin{equation}
\hat{S}^{+}=\hat{a}^{\dagger }\hat{b},\quad \hat{S}^{-}=\hat{b}^{\dagger }%
\hat{a},\quad \hat{S}^{z}=\frac{1}{2}\left( \hat{a}^{\dagger }\hat{a}-\hat{b}%
^{\dagger }\hat{b}\right)
\label{eq:SlB_0}
\end{equation}
subject to the simultaneous conditions
\begin{equation}
n_a+n_b=2S,
\qquad
m=\frac{n_a-n_b}{2},
\label{eq:SlB_cons}
\end{equation}
where $\hat n_a=\hat a^\dagger\hat a$ and $\hat n_b=\hat b^\dagger\hat b$ are the number operators, while $n_a$ and $n_b$ are their eigenvalues. The constraint \eqref{eq:SlB_cons} selects a physical sector of dimension $2S+1$. The basis states are identified according to
\begin{align}
\ket{S,m}\;\longleftrightarrow\;\ket{n_a=S+m,\;n_b=S-m}.
\end{align}
In magnetism, the Schwinger-boson representation is most commonly applied to the case $S=\tfrac12$; formally, however, it is not restricted to this value and can be used for arbitrary spin~\cite{dubus2024bosons}.

\par The JS and HP transformations are related. Within the physical sector,
one of the two modes can be eliminated using \eqref{eq:SlB_cons}, for example
by writing
$\hat a^\dagger\hat a=2S-\hat b^\dagger\hat b$.
For a simple heuristic illustration, after fixing the phase convention, one
may represent the eliminated mode by
$\sqrt{2S-\hat b^\dagger\hat b}$.
This replacement is understood only on the constrained sector and does not
imply the operator identity $\hat a=\hat a^\dagger$ in the full two-mode
Fock space. One then obtains
\begin{eqnarray}
\hat{S}^{+}
&=&
\sqrt{2S-\hat{b}^{\dagger }\hat{b}}\,\hat b,
\quad
\hat{S}^{-}
=
\hat b^{\dagger}\sqrt{2S-\hat{b}^{\dagger }\hat{b}},
\quad
\hat{S}^{z}
=
S-\hat b^{\dagger}\hat b,
\end{eqnarray}
which, up to a relabeling of the operators, reproduces the HP expression \eqref{eq:HP_0}.
A rigorous derivation of this reduction can be found in
Ref.~\cite{orlov2025discrete}.

\par In the theory of strongly correlated systems, a related family of methods is collectively known as \emph{slave-particle} representations. Their basic idea is to express the original local degree of freedom, such as an electron at a lattice site, in terms of a set of auxiliary bosonic and/or fermionic operators and then impose a constraint that selects the physical subspace. Depending on the particular factorization, the charge and spin degrees of freedom can be described by different combinations of auxiliary bosons and fermions. Well-known examples include \emph{slave-boson} approaches~\cite{Auerbach1994,wolfle_slave_1995,Izyumov_1995} and representations employing Schwinger bosons in the $t$--$J$ and Hubbard models~\cite{Izyumov_1995,coleman_1984}. Their close connection to spin--boson transformations lies not only in the use of auxiliary bosonic degrees of freedom but also in a common construction principle: the original state space is enlarged and the physical subspace is subsequently selected by imposing a constraint.

\selectlanguage{english}
\section{The \texorpdfstring{$f$}{f}-Deformed Oscillator Formalism}
\label{sec:fosc}
Consider a canonical bosonic mode specified by the creation operator $\hat a^\dagger$ and the annihilation operator $\hat a$, satisfying the canonical commutation relation $[\hat a,\hat a^\dagger]=1$, and by the number operator $\hat n=\hat a^\dagger\hat a$. A broad class of nonlinear (``$f$-deformed'') oscillator models is obtained by modifying the canonical bosonic operators with a function of the number operator~\cite{man1997f,man1998nonlinearity}:
\begin{equation}
\hat A \equiv \hat a f(\hat n),\qquad
\hat A^\dagger \equiv f(\hat n) \hat a^\dagger,
\label{eq:f_def_main}
\end{equation}
where $f:\mathbb N_0\to\mathbb R$ is a numerical function, and the operator $f(\hat n)$ is defined spectrally in the Fock basis ${\ket{n}}$ by $f(\hat n)\ket{n}=f(n)\ket{n}$. Without loss of generality, $f$ may be taken to be real and nonnegative, so that $f(\hat n)=f^\dagger(\hat n)$~\cite{man1997f}.

It is convenient to encode the deformation in terms of the \textit{structure function} $\Phi(\hat n)\equiv\hat A^\dagger\hat A$:
\begin{equation}
\Phi(\hat n) =\hat n\,f^2(\hat n),
\quad
\Phi(n)=n\,f^2(n),\quad \Phi(0)=0,
\label{eq:phi_main}
\end{equation}
which determines the matrix elements in the Fock basis:
\begin{align}
\hat A\ket{n} = 
\sqrt{\Phi(n)}\,\ket{n-1}.
\label{eq:action_main}
\end{align}
Similarly, $\hat A^\dagger\ket{n}=\sqrt{\Phi(n+1)}\,\ket{n+1}$. Using the shift identity $\hat a\,F(\hat n)=F(\hat n+1)\,\hat a$ (see Appendix~\ref{app:shift_domains}), we obtain the commutator
\begin{equation}
[\hat A,\hat A^\dagger]=\Phi(\hat n+1)-\Phi(\hat n),
\label{eq:comm_main}
\end{equation}
where different choices of $\Phi(\hat n)$ encode different nonlinear oscillators. The commutators close into a Lie algebra only for special structure functions.

\subsection{Single-Mode Spin--Boson Transformations}
\label{sec:one_mode_diff}

We seek a realization of the $\mathfrak{su}(2)$ generators ${\hat S^+,\hat S^-,\hat S^z}$ satisfying the commutation relations \eqref{eq:su2_comm_ch3_1} and \eqref{eq:su2_comm_ch3} in the finite-dimensional Hilbert space $\mathcal H_S$, with $\dim\mathcal H_S=2S+1$, corresponding to spin $S$. The Fock-space vacuum is chosen as the reference state and identified with the highest-weight state $m=+S$. This fixes the realization of $\hat S^z$ as
$\hat S^z = S-\hat n$,
and imposes, on the physical subspace, the boundary conditions
\begin{equation}
\hat S^+|n = 0\rangle=0,\qquad \hat S^-|n = 2S\rangle=0.
\label{eq:hw_lw_conditions}
\end{equation}
Following the $f$-oscillator formalism \eqref{eq:f_def_main}, consider the single-mode ansatz
\begin{equation}
\hat S^+ = \hat a f_1(\hat n),\quad
\hat S^- = f_2(\hat n)\hat a^\dagger,\quad
\hat S^z = S-\hat n,
\label{eq:universal_one_mode_ansatz_ch3}
\end{equation}
where $f_1(\hat n)$ and $f_2(\hat n)$ are generally distinct operator functions specified by numerical functions $f_{1,2}:\mathbb N_0\to\mathbb C$. Unlike in the standard Hermitian $f$-oscillator formalism, we do not assume that $f_1=f_2$ and therefore do not impose the Hermiticity condition $f(\hat n)=f^\dagger(\hat n)$. This formulation encompasses both Hermitian and non-Hermitian realizations because $\hat S^+$ and $\hat S^-$ need not be related by Hermitian conjugation.
\par Irrespective of the choice of $f_{1,2}(\hat n)$, the relations \eqref{eq:su2_comm_ch3_1} are satisfied. Thus, the only nontrivial $\mathfrak{su}(2)$ condition is \eqref{eq:su2_comm_ch3}.
Using the identities in Appendix~\ref{app:shift_domains}, one then readily obtains
\begin{align}
\hat S^- \hat S^+ &=
f_2(\hat n)\,\hat a^\dagger \hat a\, f_1(\hat n)
= \hat n\,G(\hat n),
\label{eq:SmSp_ch3}\\[2mm]
\hat S^+ \hat S^- &=
\hat a\, f_1(\hat n) f_2(\hat n)\,\hat a^\dagger
= (\hat n+1)\,G(\hat n+1),
\label{eq:SpSm_ch3}
\end{align}
where
\begin{align}\label{1422}G(\hat n) \equiv f_1(\hat n)\,f_2(\hat n).\end{align}
Consequently, the commutator condition $[\hat S^+,\hat S^-]=2(S-\hat n)$ reduces to the operator finite-difference equation
\begin{equation}
(\hat n+1)G(\hat n+1)-\hat nG(\hat n)=2(S-\hat n).
\label{eq:diff_eq_operator_ch3}
\end{equation}
In the Fock basis ${|n\rangle}$, this equation is equivalent to a first-order recurrence relation on $\mathbb N_0$. Its solution follows from standard methods for finite-difference equations and is given in Appendix~\ref{app:diff_eq_solution}. The recurrence uniquely determines all values of $G(n)$ that enter the operator products on $\mathcal H_{\mathrm{phys}}$:
\begin{equation}
G(n)=2S-n+1,\qquad n=1,\ldots,2S+1
\label{eq:G_solution_ch3}
\end{equation}
or, equivalently,
\begin{equation}
\hat S^- \hat S^+ = \hat n\,(2S-\hat n+1),\quad
\hat S^+ \hat S^- = (\hat n+1)\,(2S-\hat n).
\label{eq:products_fixed_ch3}
\end{equation}
The value $G(0)$ is not fixed by the recurrence, since it always appears multiplied by $\hat n$. Throughout this work, we adopt the convenient convention $G(0)=2S+1$, corresponding to the continuation of the same expression to $n=0$.

Equation~\eqref{eq:diff_eq_operator_ch3} plays a central role in what follows. Because the $\mathfrak{su}(2)$ algebra fixes only the product \eqref{1422}, there remains a freedom in the choice of representation arising from the nonuniqueness of the factorization of $G$ into the factors $f_1$ and $f_2$. This observation is essential for constructing different realizations. We show below that the known Holstein--Primakoff, Dyson--Maleev, and Jordan--Schwinger transformations arise naturally from this condition. We also obtain new transformations with their own constraints and advantages. Of particular interest is that this approach can be generalized naturally both to multimode representations and to other algebras for which bosonic and non-Hermitian realizations have not yet been constructed. The importance of searching for bosonic representations of other algebras is also noted in Ref.~\cite{rudoy_bogoliubov-tyablikov_2011}.

\subsection{\texorpdfstring{The \(\alpha\)-Family}{The alpha-Family}}
\label{subsec:alpha_family_ch3}

A convenient class of solutions to Eq.~\eqref{eq:diff_eq_operator_ch3} is obtained through a factorization based on a \textit{power-law} splitting, or equivalently through the single-mode $\alpha$-family. First, observe that the solution $G(n)=2S-(n-1)$ can be rewritten as
\begin{equation}\label{sqrt_2s_G}
2S\,\frac{G(n)}{2S}
=
2S\left(1-\frac{n-1}{2S}\right).
\end{equation}
This form is particularly convenient for constructing the so-called ``physical'' version of the transformations because it naturally introduces the small parameter $(n-1)/(2S)$ used in the spin-wave approximation. Without this normalization, one obtains a more formal ``mathematical'' version of the same transformations. Equation~\eqref{1422} gives
\begin{equation}
\!\!\!\!f_{1\alpha}(\hat n)=\sqrt{2S}\left(\frac{G(\hat n)}{2S}\right)^{\alpha}\!\!\!\!,\quad \!\!\!\!
f_{2\alpha}(\hat n)=\sqrt{2S}\left(\frac{G(\hat n)}{2S}\right)^{1-\alpha}\!\!\!\!\!\!\!\!,
\label{eq:alpha_split_ch3}
\end{equation}
where $\alpha\in\mathbb R$, and the normalization factor $2S$ in \eqref{sqrt_2s_G} is distributed symmetrically between the two factors. This factorization generates a continuous $\alpha$-family of realizations on $\mathcal H_{\mathrm{phys}}$:
\begin{eqnarray}
\label{eq:alpha_family_Splus_ch3}
\hat S_{\alpha}^+\!\! &=&\!\! \sqrt{2S}\left(1-\frac{\hat n}{2S}\right)^{\alpha}\hat a
\;\equiv\;
\sqrt{2S}\,\hat a\,\left(1-\frac{\hat n-1}{2S}\right)^{\alpha},
\qquad\\
\hat S_{\alpha}^-\!\! &=&\!\! \sqrt{2S}\,\hat a^\dagger\left(1-\frac{\hat n}{2S}\right)^{1-\alpha}
\!\!\!\!\!\! \equiv\;
\sqrt{2S}\left(1-\frac{\hat n-1}{2S}\right)^{1-\alpha}\!\!\hat a^\dagger.
\label{eq:alpha_family_Sminus_ch3}
\end{eqnarray}
The equivalence of the different forms follows from the shift identities given in Appendix~\ref{app:shift_domains}. The family \eqref{eq:alpha_family_Splus_ch3}--\eqref{eq:alpha_family_Sminus_ch3} contains several standard spin--boson transformations as special cases. For $\alpha=\tfrac12$, it yields the Hermitian Holstein--Primakoff transformation \eqref{eq:HP_0}; for $\alpha=1$, the non-Hermitian Dyson--Maleev transformation \eqref{eq:DM_0}; and for $\alpha=0$, the transformation dual to DM, \eqref{eq:DM_dual}. Thus, $\alpha$ provides a continuous interpolation between a Hermitian realization and two mutually dual non-Hermitian realizations of $\mathfrak{su}(2)$. Consequently, the HP and DM transformations, together with the intermediate realizations, should be viewed not as isolated constructions but as distinguished points in a single continuum of solutions arising from different factorizations of the same algebraically fixed object on the physical subspace.

The novelty of the proposed approach lies in its rigorous algebraic derivation: the universal difference equation \eqref{eq:diff_eq_operator_ch3} singles out the invariant function \eqref{1422} shared by all such realizations, including the power-law $\alpha$-family. Different bosonic representations therefore arise not from independent ansatz constructions but as different factorizations of the same algebraically determined structure. In this formulation, the algebra itself is the starting point rather than a choice of metric or a Hermiticity condition; the latter may be introduced subsequently, depending on the representation required for a particular physical or computational problem. It is also clear that, in general, the parameter $\alpha\in\mathbb R$ in "physical" form need not be restricted to the interval $[0,1]$. For arbitrary real $\alpha$, these expressions are understood through their action on the physical ladder. The shifted deformation functions involving $1-(\hat n-1)/(2S)$ are manifestly finite on the physical spectrum $n=0,\ldots,2S$; possible endpoint singularities in the composite operator expressions do not enter nonvanishing physical matrix elements, since the highest- and lowest-weight actions are fixed separately. The shifted and unshifted forms remain equivalent on their common domain of definition.

The power-law $\alpha$-splitting \eqref{eq:alpha_split_ch3} does not exhaust the full factorization freedom of the solution \eqref{1422}. A more general class of factorizations is obtained by choosing an arbitrary invertible function $u(\hat n)$ on $\mathcal H_{\mathrm{phys}}$. For any $\alpha\in\mathbb R$, one may set
\begin{equation}
f_1(\hat n)=u(\hat n)\,G(\hat n)^\alpha,
\quad
f_2(\hat n)=u(\hat n)^{-1}G(\hat n)^{1-\alpha},
\label{eq:general_u_factorization}
\end{equation}
which is, after absorbing a constant normalization into $u(\hat n)$, equivalent to
\begin{equation}
f_1(\hat n)=u(\hat n)\,f_{1,\alpha}(\hat n),
\qquad
f_2(\hat n)=u(\hat n)^{-1}f_{2,\alpha}(\hat n),
\label{eq:general_u_factorization_alpha}
\end{equation}
where the functions $f_{1,\alpha}(\hat n)$ and $f_{2,\alpha}(\hat n)$ are defined by the power-law family \eqref{eq:alpha_split_ch3}. Thus, the $\alpha$-family represents only a special case of the general factorization freedom, whereas the choice of $u(\hat n)$ introduces an additional ``gauge'' degree of freedom not fixed by the algebra.
\par This freedom may be important in applications because different choices of $u(\hat n)$ redistribute the nonlinearity between $\hat S^+$ and $\hat S^-$ in different ways. Consequently, they change not only the explicit form of the corresponding representation but also the structure of the metric operator in the quasi-Hermitian formulation and the numerical behavior of the model after truncations, expansions, and other approximations. In practice, this freedom can be used to select the representation best suited to a particular problem---for example, to simplify the Hamiltonian as much as possible or to minimize the energy error, commutator defect, deviation of the metric from the identity operator, or other characteristics of an approximate realization.

For example, admissible classes of such factorizations include rational and exponential functions of the form
\begin{equation}
u(n)=\frac{P(n)}{Q(n)},
\qquad
u(n)=e^{\phi(n)},
\label{eq:u_examples}
\end{equation}
provided that $u(n)\neq0$ for all $n=0,1,\dots,2S$. These examples demonstrate that the freedom in choosing $u(n)$ is considerably broader than the power-law $\alpha$-family and cannot be reduced to a simple constant renormalization of known realizations. In the present work, these more general factorizations are considered primarily as an indication of the underlying structure of the problem and of possible directions for further generalization. A detailed analysis, including the question of which choices of $u(n)$ are most convenient in specific physical or computational applications, lies beyond the scope of the present study.
\subsection{Approximations and Exactness on the Physical Subspace}
\label{subsec:exactness_vs_approx_ch3}
The expressions \eqref{eq:alpha_family_Splus_ch3}--\eqref{eq:alpha_family_Sminus_ch3} are exact on the finite subspace $\mathcal H_{\mathrm{phys}}$ when the appropriate constraint is imposed. Inaccuracies arise only when operator functions such as $\sqrt{2S-\hat n}$ are replaced by approximate expansions, for example Taylor series, which are then truncated. In general, such truncations violate the algebraic relations already at the operator level.
\par A Taylor expansion does not exploit the discrete nature of the spectrum of $\hat n$ and is therefore not the most natural tool for approximating functions of the occupation number. Newton series provide a more suitable alternative. The idea of using such expansions for spin--boson representations was first proposed in Ref.~\cite{kubo1952spin}, although that work subsequently employed the large-spin approximation and an ordinary Taylor expansion. The approach was later rediscovered independently in Refs.~\cite{vogl_resummation_2020, konig2021newtonseriesexpansionbosonic}. Newton series not only substantially improve computational accuracy relative to conventional truncated expansions but also conform naturally to the discrete structure of the spectrum of $\hat n$. Moreover, in the operator formulation, they automatically yield normally ordered expressions in the creation and annihilation operators. Since the functions $f(\hat n)$, $\Phi(\hat n)$, and $G(\hat n)$ in the $f$-deformed oscillator formalism are functions of a discrete operator, Newton series constitute their natural computational complement. Whereas the $f$-oscillator approach determines which algebra is realized and how it is encoded in operator functions, the Newton-series approach provides an efficient tool for expanding and approximating these functions. On the finite physical spectrum, the Newton expansion can be made exact. Its polynomial continuation to $n>2S$, however, is not constrained by the finite-dimensional spin representation and need not reproduce the same operator relations on the full Fock space. Details, including an application to the $\alpha$-family and the relation between finite-block exactness and global polynomial realizations, are given in Appendix~\ref{app:newton_series}.

\subsection{Filtered Off-Block Extensions}
\label{subsec:filters}
\par So far, we have considered transformations that are exact on the physical block and that, outside this block, either vanish automatically or can be interpreted correctly only after the appropriate constraint is imposed. As noted above, the DM transformation does not have this property automatically; it acquires it only after the constraint $0\le n\le2S$ is imposed explicitly. Likewise, approximations and expansions of the square root $\sqrt{2S-\hat n}$ in the HP transformation may generate spurious states outside the physical block. Thus, the algebraic relations and the difference equation \eqref{eq:diff_eq_operator_ch3} fix the invariant product only on the physical block, whereas the extension of the operators to the full Fock space $\mathcal F$ outside this block is not unique. This introduces an additional freedom that may naturally be interpreted as the freedom of \emph{off-block extension}. In other words, the same exact realization of $\mathfrak{su}(2)$ on the physical block can be extended to the full space $\mathcal F$ in different ways. From a practical standpoint, this motivates the construction of off-block extensions that preserve equivalence on the physical block while facilitating the analytical or numerical suppression of unphysical states.

We say that two sets of operators,
$\{\hat S^i\}$ and $\{\widetilde S^i\}$,
$i\in\{+,-,z\}$,
are \emph{equivalent on the physical subspace} if
\begin{equation}
\Pi_{2S}\,\widetilde S^i\,\Pi_{2S}
=
\Pi_{2S}\,\hat S^i\,\Pi_{2S},
\qquad
i\in\{+,-,z\},
\label{eq:block_equivalence}
\end{equation}
where $\Pi_{2S}$ is the projector onto the physical subspace. To construct such equivalent off-block extensions, introduce a smooth function $W(\hat n)$, which we call a \emph{filter}. It is assumed to leave the physical subspace unchanged while rapidly suppressing unphysical states. It is sufficient to require
\begin{equation}
W(n)=1,
\qquad
n=0,1,\dots,2S.
\label{eq:W_block_identity}
\end{equation}
The \emph{filtered} operators can then be defined as
\begin{equation}
\widetilde S^\pm
\equiv
W(\hat n)\,\hat S^\pm\,W(\hat n),
\qquad
\widetilde S^z
\equiv
W(\hat n)\,\hat S^z\,W(\hat n).
\label{eq:soft_filtered_ops}
\end{equation}
By construction, these operators are equivalent to the original ones on the physical subspace but may differ substantially outside it. One convenient example is the following family of filters:
\begin{equation}
W_\lambda(\hat n)
=
\exp\!\left[
-\lambda
\left(
\prod_{k=0}^{2S}(\hat n-k)
\right)^2
\right],
\qquad
\lambda>0.
\label{eq:analytic_filter_one_mode}
\end{equation}
Condition~\eqref{eq:W_block_identity} is satisfied automatically because, for $n=0,1,\dots,2S$, the product $\prod_{k=0}^{2S}(n-k)$ vanishes and therefore $W_\lambda(n)=1$. Outside the physical subspace, for $n>2S$, the magnitude of the product grows rapidly, leading to an exponential suppression of $W_\lambda(n)$.

\selectlanguage{english}
\section{Two-Mode Spin--Boson Transformations}
\label{sec:two_mode_js}
\par Consider the ansatz
\begin{equation}
\hat S^+ = \hat a_1^\dagger \hat a_2 f(\hat n_1,\hat n_2),\quad
\hat S^- = \hat a_2^\dagger \hat a_1 g(\hat n_1,\hat n_2),
\label{eq:JS_right_local}
\end{equation}
where $f$ and $g$ are, in general, distinct operator-valued functions defined spectrally on $\mathbb N_0^2$. Allowing both functions to depend on the two number operators $\hat n_1$ and $\hat n_2$ ensures maximal generality within the class of realizations under consideration. This ansatz is right-ordered, since $f$ and $g$ are placed to the right of the corresponding combinations of creation and annihilation operators. Appendix~\ref{app:JS_local_forms} shows that the left- and center-ordered forms are equivalent to the right-ordered one.
\par
In this section, we consider the class of representations that conserve the total boson number
$\hat N=\hat n_1+\hat n_2$, so that
\begin{align}\label{777}
    [\hat N,\hat S^\pm]=0
\end{align}
and whose diagonal generator is
\begin{eqnarray}\label{672}
\hat S^z=\frac{\hat n_1-\hat n_2}{2}.
\end{eqnarray}
This class includes both the standard Jordan--Schwinger transformation and its deformed generalizations. It should be emphasized, however, that condition \eqref{777} is not part of the definition of the $\mathfrak{su}(2)$ algebra itself; rather, it is an additional restriction imposed on the class of its bosonic realizations. Relaxing this condition leads to a broader family of transformations beyond the standard Jordan--Schwinger scheme.
\par
It is readily verified that the ansatz \eqref{eq:JS_right_local} satisfies both \eqref{eq:su2_comm_ch3_1} and \eqref{777}, irrespective of the choice of $f$ and $g$. Thus, the only nontrivial condition imposed by the $\mathfrak{su}(2)$ algebra is \eqref{eq:su2_comm_ch3}.
In the number basis, this condition is equivalent, for all $(n_1,n_2)\in\mathbb N_0^2$, to the discrete equation (see Appendix~\ref{app:JS_lattice_constraint_derivation})
\begin{equation}
n_1(n_2+1)\,H(n_1-1,n_2+1)-n_2(n_1+1)\,H(n_1,n_2)
=n_1-n_2,
\label{eq:lattice_constraint}
\end{equation}
At the boundary, terms multiplied by vanishing occupation-number factors are omitted. Here we have introduced the shifted product
\begin{equation}
H(n_1,n_2)\equiv f(n_1,n_2)\,g(n_1+1,n_2-1),\quad (n_2\ge 1)
\label{eq:H_def}
\end{equation}
Now fix the standard physical sector $\mathcal H_{2S}$ of the Jordan--Schwinger transformation, defined by the constraint $\hat N=2S$. On this sector, it is convenient to use the notation
\begin{equation}
\ket{n}_S \equiv \ket{2S-n,n}\,:\, n=0,1,\dots,2S
\label{eq:JS_block_basis}
\end{equation}
and to define
\begin{equation}
A_n \equiv f(2S-n-1,n+1)g(2S-n,n)\equiv f_{n+1}g_{n},
\label{eq:An_def}
\end{equation}
where $n=0,1,\dots,2S-1$. Condition \eqref{eq:su2_comm_ch3}, restricted to the block $\mathcal H_{2S}$, then reduces to the finite-difference recurrence relation
\begin{equation}
(n+1)(2S-n)\,A_n
-
n(2S-n+1)\,A_{n-1}
=
2(S-n),
\label{eq:An_recursion}
\end{equation}
where $n=0,1,\dots,2S$. At $n=0$, the second term is absent, whereas at $n=2S$, the first term is absent. This equation has the unique solution
\begin{equation}
A_n\equiv 1,\qquad n=0,1,\dots,2S-1.
\label{eq:An_one}
\end{equation}
This follows by direct induction. At $n=0$, Eq.~\eqref{eq:An_recursion} gives
$(2S)A_0=2S$, and hence $A_0=1$.
Next, if $A_{n-1}=1$, then
\begin{align}
(n+1)(2S-n)A_n
=
(n+1)(2S-n),
\end{align}
and therefore $A_n=1$. Thus, Eq.~\eqref{eq:An_one} holds for all $n=0,1,\dots,2S-1$.
\par As in the single-mode case, the $\mathfrak{su}(2)$ algebra fixes not the functions $f$ and $g$ separately, but only their invariant combination. A convenient representative of the corresponding factorization class is the power-law choice
\begin{equation}
f_n=(2S-n+1)^{\alpha-\frac12},
\qquad
g_n=(2S-n)^{\frac12-\alpha},
\label{eq:fg_alpha_two_mode}
\end{equation}
For arbitrary real $\alpha$, only the values of $f_n$ and $g_n$ that enter the action of $\hat S^\pm$ on the physical sector are required, namely $f_n$ for $n=1,\ldots,2S$ and $g_n$ for $n=0,\ldots,2S-1$. Hence their power-law bases are strictly positive for all nonvanishing physical transitions, and no endpoint singularities arise.

Upon identifying the block $\mathcal H_{2S}$ with the single-mode physical subspace
$\mathcal H_{\mathrm{phys}}$, these expressions reduce to the single-mode
$\alpha$-family discussed above in Sec.~\ref{sec:one_mode_diff}
(see Appendix~\ref{app:JS_alpha_reduction_ru}). Thus, the single-mode Holstein--Primakoff and Dyson--Maleev transformations, together with their intermediate forms, admit a natural interpretation as reductions of the same two-mode factorization freedom, restricted to a fixed Jordan--Schwinger block.

\selectlanguage{english}
\section{Two-Mode Transformations Beyond Jordan--Schwinger}
\label{sec:beyond_js}
It is natural to ask whether there exist two-mode representations that satisfy the same commutation relations \eqref{eq:su2_comm_ch3_1} and
\eqref{eq:su2_comm_ch3} but do not conserve the total boson number and therefore violate condition \eqref{777}. Constructions beyond the standard Jordan--Schwinger scheme have already been discussed in the literature~\cite{klein1991boson, tsue2015beyond}, although not within the $f$-deformation formalism. We show below that even a minimal nontrivial extension of the ansatz \eqref{eq:JS_right_local} in this direction admits an exact solution.
\par
We retain the choice \eqref{672}.
The simplest extension then takes the form
\begin{align}
\hat S^+
&=
\hat a_1^\dagger \hat a_2\, f_0(\hat n_1,\hat n_2)
+
(\hat a_1^\dagger)^2 f_{+2}(\hat n_1,\hat n_2)
+
\hat a_2^2 f_{-2}(\hat n_1,\hat n_2),
\label{eq:minimal_extended_Splus}
\\[1mm]
\hat S^-
&=
\hat a_2^\dagger \hat a_1\, g_0(\hat n_1,\hat n_2)
+
\hat a_1^2 g_{-2}(\hat n_1,\hat n_2)
+
(\hat a_2^\dagger)^2 g_{+2}(\hat n_1,\hat n_2),
\label{eq:minimal_extended_Sminus}
\end{align}
where $f_0$, $f_{\pm2}$, $g_0$, and $g_{\pm2}$ are diagonal operator-valued functions defined spectrally on $\mathbb N_0^2$. This ansatz permits only three changes in the total boson number:
\begin{align}
\Delta N=0,\qquad \Delta N=+2,\qquad \Delta N=-2.
\end{align}
Unlike the Jordan--Schwinger representation, in which the ladder operators preserve subspaces of fixed $N$, the action of $\hat S^\pm$ now mixes neighboring $N$-sectors of the same parity. Relations \eqref{eq:su2_comm_ch3_1} are satisfied identically (see Appendix~\ref{app:weight_monomials}), so that \eqref{eq:su2_comm_ch3} remains the only nontrivial condition.
\par We restrict attention to the constant subclass
\begin{equation}
f_0=a,\,
g_0=b,\,
f_{+2}=c,\,
f_{-2}=d,\,
g_{-2}=e,\,
g_{+2}=h,
\label{eq:constant_ansatz_coeffs}
\end{equation}
where $a,b,c,d,e,h\in\mathbb C$ are independent of $\hat n_1$ and $\hat n_2$. The ansatz \eqref{eq:minimal_extended_Splus}--\eqref{eq:minimal_extended_Sminus} then becomes
\begin{align}
\hat S^+
&=
a\,\hat a_1^\dagger \hat a_2
+
c\,(\hat a_1^\dagger)^2
+
d\,\hat a_2^2,
\label{eq:constant_Splus}
\\[1mm]
\hat S^-
&=
b\,\hat a_2^\dagger \hat a_1
+
e\,\hat a_1^2
+
h\,(\hat a_2^\dagger)^2,
\label{eq:constant_Sminus}
\end{align}
The operators \eqref{eq:constant_Splus}--\eqref{eq:constant_Sminus}
furnish an algebraic realization of the $\mathfrak{su}(2)$
commutation relations, i.e., satisfy
\eqref{eq:su2_comm_ch3_1} and \eqref{eq:su2_comm_ch3},
if and only if the coefficients $a,b,c,d,e,h$ obey the system of equations (see Appendix~\ref{app:weight_monomials})
\begin{equation}
ah=bc,\quad
bd=ae,\quad
ce=dh,\quad
ab-4ce=1.
\label{eq:constant_constraints}
\end{equation}
\par The system \eqref{eq:constant_constraints} defines a three-parameter
family of constant solutions.
In particular, when $a\neq 0$, $b\neq 0$, and $c\neq 0$, the parameters $a$, $b$, and $c$
may be chosen arbitrarily, while the remaining coefficients are given by
\begin{equation}
e=\frac{ab-1}{4c},
\qquad
d=\frac{ae}{b},
\qquad
h=\frac{bc}{a}.
\label{eq:constant_general_parametrization}
\end{equation}
\par Under the additional constraints $a=b$ and $c=d=e=h$, the system
\eqref{eq:constant_constraints} reduces to the single relation
$a^2-4c^2=1$.
Consequently, for any $r\in\mathbb R$, the parametrization
\begin{equation}
a=\cosh(2r),
\qquad
c=-\frac12\sinh(2r)
\label{eq:symmetric_parametrization_r}
\end{equation}
defines an exact realization of the $\mathfrak{su}(2)$ algebra:
\begin{align}
\hat S^+(r)
&=
\cosh(2r)\,\hat a_1^\dagger \hat a_2
-\frac12\sinh(2r)((\hat a_1^\dagger)^2
+\hat a_2^2),
\label{eq:symmetric_Splus_r}
\\[1mm]
\hat S^-(r)
&=
\cosh(2r)\,\hat a_2^\dagger \hat a_1
-\frac12\sinh(2r)(\hat a_1^2
+(\hat a_2^\dagger)^2),
\label{eq:symmetric_Sminus_r}
\\[1mm]
\hat S^z&=\frac{\hat n_1-\hat n_2}{2}.
\label{eq:symmetric_Sz_r}
\end{align}
\par The family \eqref{eq:symmetric_Splus_r}--\eqref{eq:symmetric_Sz_r} admits a natural interpretation as a squeezed Jordan--Schwinger representation generated by a two-mode squeezing transformation. Specifically, introduce the two-mode squeezing operator
\begin{equation}
\hat U(r)=
\exp\!\left[
r\bigl(\hat a_1^\dagger \hat a_2^\dagger-\hat a_1\hat a_2\bigr)
\right],
\label{eq:two_mode_squeeze_for_remark}
\end{equation}
\par The corresponding generators can then be written as
\begin{equation}
\hat S^\pm(r)=\hat U(r)\hat S^\pm \hat U(r)^{-1},
\label{eq:squeezed_similarity_relation}
\end{equation}
\par Thus, the symmetric solution of the constant subclass is not merely a formal special case of the system \eqref{eq:constant_constraints}; it also has a natural interpretation in terms of a Bogoliubov transformation.
\par At $r=0$, Eqs.~\eqref{eq:symmetric_Splus_r}--\eqref{eq:symmetric_Sminus_r}
reduce to the standard Jordan--Schwinger transformation. For $r\neq 0$, however, the operators $\hat S^\pm(r)$ contain the terms
$(\hat a_1^\dagger)^2$, $\hat a_2^2$, $\hat a_1^2$, $(\hat a_2^\dagger)^2$,
which change the total boson number by $\pm2$ and therefore violate condition \eqref{777}.
Nevertheless, the representation retains the conserved quantity
\begin{equation}
\hat N_r \equiv \hat U(r)\,\hat N\,\hat U(r)^{-1},
\label{eq:Nr_def}
\end{equation}
which commutes with all generators:
\begin{equation}
[\hat N_r,\hat S^\pm(r)]=0,
\qquad
[\hat N_r,\hat S^z(r)]=0.
\label{eq:Nr_commutes}
\end{equation}
Its explicit form is
\begin{equation}
\hat N_r
=
\cosh(2r)\,(\hat n_1+\hat n_2)
-
\sinh(2r)\,(\hat a_1^\dagger \hat a_2^\dagger+\hat a_1\hat a_2)
+
2\sinh^2 r.
\label{eq:Nr_explicit}
\end{equation}
Consequently, the irreducible spin blocks are mapped onto the ``squeezed'' subspaces
\begin{equation}
\mathcal H_{2S}^{(r)}
=
\hat U(r)\,\mathcal H_{2S},
\label{eq:squeezed_blocks}
\end{equation}
where $\mathcal H_{2S}$ is the standard physical sector of the Jordan--Schwinger transformation at $r=0$.
\par
The family \eqref{eq:symmetric_Splus_r}--\eqref{eq:symmetric_Sz_r}
constructed above shows that the standard two-mode Jordan--Schwinger representation
does not exhaust all two-mode bosonic realizations of the
$\mathfrak{su}(2)$ algebra compatible with a fixed choice of the operator $\hat S^z$.
At $r=0$, it reduces to the ordinary Jordan--Schwinger representation,
whereas for $r\neq 0$ it yields a nontrivial deformation in which the conserved operator is no longer the total boson number $\hat N$
but its squeezed image $\hat N_r$, defined in \eqref{eq:Nr_def}.
Thus, the distinction between the standard Jordan--Schwinger representation
and the family \eqref{eq:symmetric_Splus_r}--\eqref{eq:symmetric_Sz_r}
lies not in a modification of the $\mathfrak{su}(2)$ algebra itself, but in the manner in which
its irreducible spin blocks are embedded into the two-mode bosonic
Fock space. In the standard case, these blocks are selected by the condition
$\hat N=2S$, whereas in the present family they are
defined by $\hat N_r = 2S$.

It should be emphasized that the family
\eqref{eq:symmetric_Splus_r}--\eqref{eq:symmetric_Sz_r}
does not exhaust all two-mode realizations of $\mathfrak{su}(2)$
that fail to conserve $\hat N$.
It is singled out here as a minimal nontrivial yet fully
controlled extension of the standard Jordan--Schwinger ansatz.
A more general class of solutions may include coefficients that depend
on $\hat n_1$ and $\hat n_2$, channels with $\Delta N=\pm4,\pm6,\dots$,
and conserved quantities other than $\hat N_r$. Thus, the family \eqref{eq:symmetric_Splus_r}--\eqref{eq:symmetric_Sz_r}
should be regarded not as the most general solution, but as a minimal
example of an exact two-mode representation beyond the standard
$N$-conserving Jordan--Schwinger class.

\selectlanguage{english}
\section{Conclusion}
\par In this work, we have developed a unified algebraic approach to spin--boson transformations based on the formalism of $f$-deformed oscillators. This approach cleanly separates the part of the solution uniquely fixed by the commutation relations of the $\mathfrak{su}(2)$ algebra from the factorization freedom that specifies a particular realization. In the single-mode case, we have shown that the Holstein--Primakoff and Dyson--Maleev transformations and the interpolating $\alpha$-family arise as different factorizations of a single algebraically determined object. The standard spin--boson mappings can therefore be interpreted as realizations of a common underlying structure, making it possible to distinguish the exact algebraic content on the physical subspace from effects associated with non-Hermiticity, the choice of metric, and off-block extensions.

In the two-mode case, the same approach yields both deformed generalizations of the Jordan--Schwinger representation and exact examples of two-mode realizations beyond the standard $N$-conserving class. The method therefore serves not only as a means of unifying known transformations but also as a tool for the systematic construction of new bosonic representations.

A central result is the explicit isolation of the invariant algebraic part of the solution and its separation from the additional structures associated with the choice of factorization, metric, and operator extension beyond the physical subspace. This distinction naturally connects the standard spin--boson transformations with questions of quasi-Hermiticity, truncation, numerical stability, and discrete expansions of operator functions.

The scope of the proposed approach is not limited to the $\mathfrak{su}(2)$ algebra: it also provides a general method for constructing bosonic realizations and non-Hermitian representations of a broader class of algebraic structures.

\section*{Conflict of Interest}
The authors declare no conflict of interest.

\selectlanguage{english}
\section*{Acknowledgments}
The authors thank their colleagues for helpful discussions. This research received no specific funding.

\bibliographystyle{unsrtnat}
\bibliography{bib_TMF}

\appendix

\section{Shift Identities}
\label{app:shift_domains}
\par Let $\hat a$ and $\hat a^\dagger$ be canonical bosonic operators satisfying $[\hat a,\hat a^\dagger]=1$, and let $\hat n=\hat a^\dagger\hat a$ be the number operator.
For any function $F:\mathbb N_0\to\mathbb C$, define the operator $F(\hat n)$ spectrally by
$F(\hat n)\ket{n}=F(n)\ket{n}$.
When necessary, we set $F(-1)=0$ so that the corresponding shifted expressions are well defined at the boundary of the spectrum. The following identities then hold:
\begin{equation}
\hat a\,F(\hat n)=F(\hat n+1)\,\hat a,\qquad
F(\hat n)\,\hat a^\dagger=\hat a^\dagger\,F(\hat n+1),
\label{eq:shift_forward}
\end{equation}
and, equivalently,
\begin{equation}
F(\hat n)\,\hat a=\hat a\,F(\hat n-1),\qquad
\hat a^\dagger\,F(\hat n)=F(\hat n-1)\,\hat a^\dagger.
\label{eq:shift_backward}
\end{equation}
\begin{proof}
It suffices to verify Eq.~\eqref{eq:shift_forward} on the basis vectors $\ket{n}$.
Using $\hat a\ket{n}=\sqrt n\,\ket{n-1}$, we obtain
\begin{align}
\hat a\,F(\hat n)\ket{n}
=
\sqrt n\,F(n)\ket{n-1}.
\end{align}
On the other hand,
\begin{align}
F(\hat n+1)\,\hat a\ket{n}
=
F(n)\sqrt n\,\ket{n-1}.
\end{align}
Therefore, $\hat aF(\hat n)=F(\hat n+1)\hat a$ for every basis vector in $\{\ket{n}\}$.
The second identity in \eqref{eq:shift_forward} is proved analogously using
$\hat a^\dagger\ket{n}=\sqrt{n+1}\ket{n+1}$.
The backward-shift identities \eqref{eq:shift_backward} follow from \eqref{eq:shift_forward} by shifting the argument of the function.\footnote{Equation~\eqref{eq:shift_forward} may be regarded as a natural generalization of the standard commutator identity for the number operator. In particular, for $F(\hat n)=\hat n=\hat a^\dagger\hat a$, one has
$\hat a\,\hat n=\hat a\hat a^\dagger\hat a=(\hat a^\dagger\hat a+1)\hat a=(\hat n+1)\hat a$.}
\end{proof}
\section{Solution of the Universal Finite-Difference Equation}
\label{app:diff_eq_solution}
For the single-mode ansatz \eqref{eq:universal_one_mode_ansatz_ch3}, the commutator condition of the $\mathfrak{su}(2)$ algebra reduces to Eq.~\eqref{eq:diff_eq_operator_ch3}, which takes the following form in the basis $\ket{n}$:
\begin{equation}
(n+1)G(n+1)-nG(n)=2(S-n),\qquad n\in\mathbb{N}_0.
\label{eq:diff_eq_scalar_ch3}
\end{equation}
Define $Q(n)\equiv n G(n)$, so that $Q(0)=0$ independently of $G(0)$. The recurrence relation can then be written as
\begin{equation}
Q(n+1)-Q(n)=2(S-n).
\label{eq:Q_rec_app}
\end{equation}
Summing the recurrence \eqref{eq:Q_rec_app} over $k$ from $0$ to $n-1$, we obtain
\begin{equation}
Q(n)=\sum_{k=0}^{n-1}2(S-k)=2nS-n(n-1).
\label{eq:Q_closed_app}
\end{equation}
Hence, for $n\ge 1$,
\begin{equation}
G(n)=\frac{Q(n)}{n}=2S-(n-1).
\label{eq:G_n_ge1_app}
\end{equation}
The value of $G(0)$ remains undetermined because it always appears multiplied by $n$. A convenient convention, consistent with the smooth continuation of \eqref{eq:G_n_ge1_app}, is
\begin{equation}
G(0)=2S+1,
\label{eq:G0_app}
\end{equation}
as stated below Eq.~\eqref{eq:G_solution_ch3}.

\section{Newton Series: Finite-Difference Normal Ordering on a Finite Block}
\label{app:newton_series}
Finite-difference (Newton-series) expansions are naturally adapted to operator functions defined on the discrete spectrum
$n\in\mathbb{N}_0$ and provide a direct route to normal ordering.
For a function $F:\mathbb{N}_0\to\mathbb{C}$, define the forward-difference operator by
\begin{equation}
(\Delta F)(n)=F(n+1)-F(n),\qquad \Delta^k\equiv\Delta\circ\cdots\circ\Delta.
\label{eq:Delta_def_app}
\end{equation}
The Newton expansion about $n=0$ then reads
\begin{equation}
F(\hat n)=\sum_{k=0}^{\infty}\frac{\Delta^kF(0)}{k!}\,\hat n^{(k)},
\quad
\hat n^{(k)}\equiv\hat n(\hat n-1)\cdots(\hat n-k+1),
\label{eq:newton_series_app}
\end{equation}
where the coefficients are given by the binomial transform
\begin{equation}
\Delta^kF(0)=\sum_{\ell=0}^{k}(-1)^{k-\ell}\binom{k}{\ell}\,F(\ell).
\label{eq:binomial_transform_app}
\end{equation}
The key identity, which follows directly by applying both sides to $\ket{n}$ and using
$\hat a\ket{n}=\sqrt{n}\ket{n-1}$, is
\begin{equation}
\hat n^{(k)} = :\hat n^k: = \hat a^{\dagger k}\hat a^{k},
\label{eq:falling_factorial_normal_order_app}
\end{equation}
so that the expansion \eqref{eq:newton_series_app} is already normally ordered term by term.

On the finite set $n\in\{0,1,\dots,2S\}$, any function $F(n)$ admits an exact representation
by a unique polynomial of degree $\le 2S$ through interpolation~\cite{konig2021newtonseriesexpansionbosonic,vogl_resummation_2020}.
In the Newton basis, this corresponds to the finite expansion
\begin{equation}
F(\hat n)\Big|_{\mathcal{H}_{\mathrm{phys}}}
=\sum_{k=0}^{2S}\frac{\Delta^kF(0)}{k!}\,\hat a^{\dagger k}\hat a^{k},
\label{eq:newton_finite_block_app}
\end{equation}
which reproduces the values of $F(n)$ exactly for all
$0\le n\le 2S$. Hence, when the corresponding spin generators are
defined on the finite physical subspace $\mathcal H_{\mathrm{phys}}$,
the Newton representation reproduces the required operator functions
exactly and introduces no interpolation error. The corresponding
$\mathfrak{su}(2)$ commutation relations are therefore recovered
exactly on this subspace.
This distinguishes the finite Newton interpolation from truncated Taylor expansions, which are not adapted to the discrete spectrum and generally fail to account for the boundary conditions at $n=0$ and $n=2S$.

Exactness on the physical spectrum does not, however, imply a global
operator identity on the full Fock space. In general, the interpolating
polynomial does not coincide with the original function for $n>2S$,
and the corresponding polynomial operator identities need not hold
outside the physical subspace.

This distinction is important in view of known restrictions on
polynomial realizations of semisimple Lie algebras in quantum canonical
operators and, in particular, no-go results for global single-mode
realizations of compact $\mathfrak{su}(2)$ by skew-Hermitian polynomial
elements of the canonical Weyl algebra~\cite{Joseph1972,Heib2025}.
The standard spin--boson transformations considered here do not conflict
with these results. The Holstein--Primakoff transformation contains the
nonpolynomial spectral function $\sqrt{2S-\hat n}$, whereas the
Dyson--Maleev transformation, although polynomial, does not satisfy the
adjointness condition
$\hat S^-=(\hat S^+)^\dagger$
with respect to the standard Fock-space inner product. In the latter
case, the physical adjointness structure is recovered on the finite
physical space through the corresponding positive metric operator.
Likewise, finite polynomial representatives obtained by Newton
interpolation are exact on the specified physical spin space, but this
does not imply the existence of the same global polynomial realization
of compact $\mathfrak{su}(2)$ on the full single-mode Fock space.

\subsection{\texorpdfstring{Application to the \(\alpha\)-Family}
{Application to the alpha-Family}}

For convenience, we use here the unnormalized representative of the
same factorization class. For the present application, only the values
of the deformation functions on the physical subspace are required.
Define
\begin{equation}
F_\alpha(n)\equiv(2S+1-n)^\alpha,
\qquad n=0,1,\ldots,2S.
\label{eq:Falpha_def_app}
\end{equation}
Up to the constant normalization and the shift $n\mapsto n-1$
already incorporated in
\eqref{eq:alpha_family_Splus_ch3}--\eqref{eq:alpha_family_Sminus_ch3},
the corresponding unnormalized deformation factors are
$F_\alpha(\hat n)$ and $F_{1-\alpha}(\hat n)$.

Using \eqref{eq:newton_series_app}--\eqref{eq:newton_finite_block_app},
we obtain the exact normally ordered representation on the physical subspace,
\begin{align}
F_\alpha(\hat n)\Big|_{\mathcal H_{\mathrm{phys}}}
&=
\sum_{k=0}^{2S}
\frac{\Delta^kF_\alpha(0)}{k!}
\hat a^{\dagger k}\hat a^k,
\\
\Delta^kF_\alpha(0)
&=
\sum_{\ell=0}^{k}
(-1)^{k-\ell}\binom{k}{\ell}(2S+1-\ell)^\alpha.
\label{eq:Falpha_newton_coeffs_app}
\end{align}
On the physical subspace, i.e., after projection onto
$\mathcal H_{\mathrm{phys}}$, it follows that
\begin{align}
\hat S_{(\alpha)}^+
&=\hat a\sum_{k=0}^{2S}\frac{\Delta^kF_\alpha(0)}{k!}\,\hat a^{\dagger k}\hat a^{k},\\[1mm]
\hat S_{(\alpha)}^-
&=\left(\sum_{k=0}^{2S}\frac{\Delta^kF_{1-\alpha}(0)}{k!}\,\hat a^{\dagger k}\hat a^{k}\right)\hat a^\dagger.
\label{eq:Salpha_normal_order_app}
\end{align}


\section{Equivalence of Left-, Center-, and Right-Ordered Conventions}
\label{app:JS_local_forms}
For bosonic operators and an operator-valued function $F(\hat n_1,\hat n_2)$, the standard shift identities read
\begin{align}
\hat a_1^\dagger F(\hat n_1,\hat n_2)
&=
F(\hat n_1-1,\hat n_2)\hat a_1^\dagger, \\
F(\hat n_1,\hat n_2)\hat a_1
&=
\hat a_1 F(\hat n_1-1,\hat n_2),\nonumber
\\
\hat a_2^\dagger F(\hat n_1,\hat n_2)
&=
F(\hat n_1,\hat n_2-1)\hat a_2^\dagger, \nonumber\\
F(\hat n_1,\hat n_2)\hat a_2
&=
\hat a_2 F(\hat n_1,\hat n_2-1).\nonumber
\end{align}
These relations show that the left-, center-, and right-ordered conventions
are equivalent and differ only by a relabeling of the function arguments. In particular,
\begin{align}
\hat a_1^\dagger F(\hat n_1,\hat n_2)\hat a_2
&=
\hat a_1^\dagger \hat a_2\,F(\hat n_1,\hat n_2-1),\\\nonumber
F(\hat n_1,\hat n_2)\hat a_1^\dagger\hat a_2
&=
\hat a_1^\dagger \hat a_2\,F(\hat n_1+1,\hat n_2-1).
\end{align}
Thus, adopting the right-ordered convention entails no loss of generality.

\section{\texorpdfstring
{Derivation of Equation~\eqref{eq:lattice_constraint}}
{Derivation of the Lattice Constraint}}
\label{app:JS_lattice_constraint_derivation}
Equation~\eqref{eq:JS_right_local} directly gives
\begin{align}
\hat S^+\ket{n_1,n_2}
&=
\sqrt{(n_1+1)n_2}\,f(n_1,n_2)\ket{n_1+1,n_2-1},
\\
\hat S^-\ket{n_1,n_2}
&=
\sqrt{n_1(n_2+1)}\,g(n_1,n_2)\ket{n_1-1,n_2+1},
\end{align}
where, by convention, the vector is taken to be zero if either index
becomes negative. Successive application of the ladder operators yields
\begin{align} &\hat S^+\hat S^-\ket{n_1,n_2} = \nonumber\\ &n_1(n_2+1)\,g(n_1,n_2)f(n_1-1,n_2+1)\ket{n_1,n_2}, \\
&\hat S^-\hat S^+\ket{n_1,n_2} = \nonumber\\ &n_2(n_1+1)\,f(n_1,n_2)g(n_1+1,n_2-1)\ket{n_1,n_2}. \end{align}
Using the definition \eqref{eq:H_def}, we obtain
\begin{align}
\hat S^+\hat S^-\ket{n_1,n_2}
&=
n_1(n_2+1)\,H(n_1-1,n_2+1)\ket{n_1,n_2},
\\
\hat S^-\hat S^+\ket{n_1,n_2}
&=
n_2(n_1+1)\,H(n_1,n_2)\ket{n_1,n_2}.
\end{align}
Therefore, the condition
\begin{align}
[\hat S^+,\hat S^-]\ket{n_1,n_2}
=
(n_1-n_2)\ket{n_1,n_2}
\end{align}
is equivalent to Eq.~\eqref{eq:lattice_constraint}.

\section{Reduction to the Single-Mode \texorpdfstring{$\alpha$}{alpha}-Family}
\label{app:JS_alpha_reduction_ru}
As shown in Sec.~\ref{sec:two_mode_js}, on a fixed block $\mathcal H_{2S}$, after making the identification
$\ket{n}_S \equiv \ket{2S-n,n}$ and imposing
$\hat S^z\ket{n}_S=(S-n)\ket{n}_S$,
the $\mathfrak{su}(2)$ algebra fixes only the product
$f_{n+1}g_n=1$,
$n=0,1,\dots,2S-1$. For convenience, we now pass from the unnormalized choice \eqref{eq:fg_alpha_two_mode} to the standard ``physical'' normalization by the reciprocal rescaling $f_n\mapsto(2S)^{\frac12-\alpha}f_n$ and $g_n\mapsto(2S)^{\alpha-\frac12}g_n$, which leaves the invariant product $f_{n+1}g_n=1$ unchanged:
\begin{equation}
\!\!\!f_n^{(\alpha)}
=
\left(1-\frac{n-1}{2S}\right)^{\alpha-\frac12}\!\!\!\!\!\!,
\,
g_n^{(\alpha)}
=
\left(1-\frac{n}{2S}\right)^{\frac12-\alpha}\!\!\!\!\!\!\!\!\!,
\,
\alpha\in\mathbb R.
\label{eq:fg_alpha_block_ru}
\end{equation}
Here and below, only the values entering the action of $\hat S^\pm$ on $\mathcal H_{2S}$ are required: $f_n^{(\alpha)}$ for $n=1,\ldots,2S$ and $g_n^{(\alpha)}$ for $n=0,\ldots,2S-1$.
Indeed,
\begin{align}
f_{n+1}^{(\alpha)}g_n^{(\alpha)}
=
\left(1-\frac{n}{2S}\right)^{\alpha-\frac12}
\left(1-\frac{n}{2S}\right)^{\frac12-\alpha}
=1.
\end{align}
Accordingly, the two-mode ladder operators act on $\mathcal H_{2S}$ as
\begin{align}
\hat S^+_{\alpha}\ket{n}_S
&=
\sqrt{(2S-n+1)n}\,f_n^{(\alpha)}\ket{n-1}_S
\nonumber\\
&=
\sqrt{2S}\,\sqrt n
\left(1-\frac{n-1}{2S}\right)^\alpha
\ket{n-1}_S,
\label{eq:Jplus_alpha_block_ru}
\\[1mm]
\hat S^-_{\alpha}\ket{n}_S
&=
\sqrt{(n+1)(2S-n)}\,g_n^{(\alpha)}\ket{n+1}_S
\nonumber\\
&=
\sqrt{2S}\,\sqrt{n+1}
\left(1-\frac{n}{2S}\right)^{1-\alpha}
\ket{n+1}_S.
\label{eq:Jminus_alpha_block_ru}
\end{align}
Using the shift identities, these relations are equivalent to the single-mode expressions
\begin{align}
\hat S^+_{\alpha}
&=
\sqrt{2S}\,\hat a
\left(1-\frac{\hat n-1}{2S}\right)^\alpha
\equiv
\sqrt{2S}\left(1-\frac{\hat n}{2S}\right)^\alpha\hat a,
\label{eq:Jplus_to_one_mode_ru}
\\[1mm]
\hat S^-_{\alpha}
&=
\sqrt{2S}\,\hat a^\dagger
\left(1-\frac{\hat n}{2S}\right)^{1-\alpha}
\equiv
\sqrt{2S}\left(1-\frac{\hat n-1}{2S}\right)^{1-\alpha}\hat a^\dagger.
\label{eq:Jminus_to_one_mode_ru}
\end{align}
Thus, under the identification $\mathcal H_{2S}\cong\mathcal H_{\mathrm{phys}}$,
the two-mode $\alpha$-family \eqref{eq:Jplus_alpha_block_ru}--\eqref{eq:Jminus_alpha_block_ru}
reduces exactly to the single-mode $\alpha$-family of Sec.~\ref{sec:one_mode_diff}. Consequently, on a fixed block $N=2S$, it does not define a new independent structure but coincides exactly with the single-mode case.
\section{Weight of Operator Monomials}
\label{app:weight_monomials}
We say that an operator $\hat X$ has weight $k$ if
$[\hat S^z,\hat X]=k\,\hat X$. Let $F(\hat n_1,\hat n_2)$ be an operator-valued function defined spectrally
on $\mathbb N_0^2$. Then, for the normally ordered monomial
\begin{equation}
\hat M_{pqrs}[F]
\equiv
(\hat a_1^\dagger)^p
\hat a_1^q
(\hat a_2^\dagger)^r
\hat a_2^s
\,F(\hat n_1,\hat n_2),
\label{eq:general_monomial_weight}
\end{equation}
where $p,q,r,s\in\mathbb N_0$, one has
\begin{equation}
[\hat S^z,\hat M_{pqrs}[F]]
=
\frac{p-q-r+s}{2}\,
\hat M_{pqrs}[F].
\label{eq:general_monomial_weight_result}
\end{equation}
In other words, the weight of the monomial \eqref{eq:general_monomial_weight} is
$\tfrac{p-q-r+s}{2}$.
\begin{proof}
Since $F(\hat n_1,\hat n_2)$ depends only on $\hat n_1$ and $\hat n_2$, one has
\begin{equation}
[\hat S^z,F(\hat n_1,\hat n_2)]=0.
\label{eq:F_commutes_Sz}
\end{equation}
Furthermore, the canonical commutation relations
\begin{eqnarray}
[\hat n_1,\hat a_1^\dagger]=\hat a_1^\dagger,
\qquad
[\hat n_1,\hat a_1]=-\hat a_1,
\end{eqnarray}
\begin{eqnarray}
[\hat n_2,\hat a_2^\dagger]=\hat a_2^\dagger,
\qquad
[\hat n_2,\hat a_2]=-\hat a_2
\end{eqnarray}
together with the mutual commutativity of operators belonging to different modes imply
\begin{align}
[\hat S^z,\hat a_1^\dagger]&=\frac12\hat a_1^\dagger,
&
[\hat S^z,\hat a_1]&=-\frac12\hat a_1,
\label{eq:Sz_a1_weights}
\\
[\hat S^z,\hat a_2^\dagger]&=-\frac12\hat a_2^\dagger,
&
[\hat S^z,\hat a_2]&=\frac12\hat a_2.
\label{eq:Sz_a2_weights}
\end{align}
We now apply the Leibniz rule
\begin{align}
[\hat S^z,\hat X\hat Y]=[\hat S^z,\hat X]\hat Y+\hat X[\hat S^z,\hat Y].
\end{align}
Applying it successively to the product
$(\hat a_1^\dagger)^p \hat a_1^q (\hat a_2^\dagger)^r \hat a_2^s F(\hat n_1,\hat n_2)$
and using \eqref{eq:F_commutes_Sz}, \eqref{eq:Sz_a1_weights}, and
\eqref{eq:Sz_a2_weights}, we obtain
\begin{align*}
&[\hat S^z,\hat M_{pqrs}[F]]
=
\left(
\frac{p}{2}
-\frac{q}{2}
-\frac{r}{2}
+\frac{s}{2}
\right)
(\hat a_1^\dagger)^p
\hat a_1^q\\\nonumber
&\times
(\hat a_2^\dagger)^r
\hat a_2^s
F(\hat n_1,\hat n_2)
=
\frac{p-q-r+s}{2}\,\hat M_{pqrs}[F].
\end{align*}
\end{proof}
For the ansatz with $\Delta N=0,\pm2$, the monomial
$\hat a_1^\dagger\hat a_2$ has weight
\[
\frac{1-0-0+1}{2}=1.
\]
Similarly, the monomials $(\hat a_1^\dagger)^2$ and $\hat a_2^2$ both have weight $1$,
so all three terms in \eqref{eq:minimal_extended_Splus}
have weight $+1$, and hence
$[\hat S^z,\hat S^+]=\hat S^+$. Likewise,
$\hat a_2^\dagger\hat a_1$, $\hat a_1^2$, and $(\hat a_2^\dagger)^2$
have weight $-1$, so that $[\hat S^z,\hat S^-]=-\hat S^-$. Therefore, the relations $[\hat S^z,\hat S^\pm]=\pm \hat S^\pm$ hold automatically, and the only nontrivial condition imposed by the $\mathfrak{su}(2)$ algebra is the commutator
$[\hat S^+,\hat S^-]=2\hat S^z$.

\begin{theorem}
\label{thm:constant_subclass_solution}
The operators \eqref{eq:constant_Splus}--\eqref{eq:constant_Sminus}
furnish an algebraic realization of the $\mathfrak{su}(2)$
commutation relations, i.e., they satisfy
\begin{align}
[\hat S^z,\hat S^\pm]=\pm \hat S^\pm,
\qquad
[\hat S^+,\hat S^-]=2\hat S^z,
\end{align}
if and only if the coefficients $a,b,c,d,e,h$ satisfy the system
\begin{equation}
ah=bc,\quad
bd=ae,\quad
ce=dh,\quad
ab-4ce=1.
\label{eq:constant_constraints_1}
\end{equation}
\end{theorem}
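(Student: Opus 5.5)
The plan is a direct computation. By the weight argument of Appendix~\ref{app:weight_monomials}, each of the three monomials in \eqref{eq:constant_Splus} has weight $+1$ and each monomial in \eqref{eq:constant_Sminus} has weight $-1$. Hence $[\hat S^z,\hat S^\pm]=\pm\hat S^\pm$ holds for all values of $a,\dots,h$. The theorem therefore reduces to showing that $[\hat S^+,\hat S^-]=2\hat S^z=\hat n_1-\hat n_2$ holds if and only if \eqref{eq:constant_constraints_1} holds.

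I will expand $[\hat S^+,\hat S^-]$ bilinearly into nine elementary commutators. Each one follows from the canonical relations together with $[\hat a,(\hat a^\dagger)^2]=2\hat a^\dagger$ and $[\hat a^2,(\hat a^\dagger)^2]=4\hat n+2$, with modes 1 and 2 commuting. I expect:
\begin{align*}
&[\hat a_1^\dagger\hat a_2,\hat a_2^\dagger\hat a_1]=\hat n_1-\hat n_2, &&[\hat a_1^\dagger\hat a_2,\hat a_1^2]=-2\hat a_1\hat a_2, &&[\hat a_1^\dagger\hat a_2,(\hat a_2^\dagger)^2]=2\hat a_1^\dagger\hat a_2^\dagger,\\
&[(\hat a_1^\dagger)^2,\hat a_2^\dagger\hat a_1]=-2\hat a_1^\dagger\hat a_2^\dagger, &&[(\hat a_1^\dagger)^2,\hat a_1^2]=-(4\hat n_1+2), &&[\hat a_2^2,\hat a_2^\dagger\hat a_1]=2\hat a_1\hat a_2,\\
&[\hat a_2^2,(\hat a_2^\dagger)^2]=4\hat n_2+2.
\end{align*}
The two remaining commutators, $[(\hat a_1^\dagger)^2,(\hat a_2^\dagger)^2]$ and $[\hat a_2^2,\hat a_1^2]$, vanish. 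Collecting terms gives
\begin{equation*}
[\hat S^+,\hat S^-]=(ab-4ce)\hat n_1-(ab-4dh)\hat n_2+2(dh-ce)+2(ah-bc)\hat a_1^\dagger\hat a_2^\dagger+2(bd-ae)\hat a_1\hat a_2 .
\end{equation*}

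Next I will argue that the operators $\hat 1,\hat n_1,\hat n_2,\hat a_1^\dagger\hat a_2^\dagger,\hat a_1\hat a_2$ are linearly independent on the two-mode Fock space. This is the only step that needs care, and I would justify it via matrix elements. The last two operators shift $(n_1,n_2)$ by $(+1,+1)$ and $(-1,-1)$ with nonzero amplitudes, while the first three are diagonal. Evaluating the diagonal part on $\ket{0,0}$, $\ket{1,0}$ and $\ket{0,1}$ separates $\hat 1$, $\hat n_1$ and $\hat n_2$.

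Matching the collected expression to $\hat n_1-\hat n_2$ then gives five conditions: $ab-4ce=1$, $ab-4dh=1$, $dh=ce$, $ah=bc$ and $bd=ae$. The first three are equivalent to the pair $ce=dh$ and $ab-4ce=1$, so the conditions collapse exactly to \eqref{eq:constant_constraints_1}. The converse direction is immediate, since substituting \eqref{eq:constant_constraints_1} into the collected expression yields $\hat n_1-\hat n_2$. The main risk is sign bookkeeping in the mixed commutators, so I would double-check the $\hat a_1^\dagger\hat a_2^\dagger$ and $\hat a_1\hat a_2$ coefficients against the $r$-symmetric case \eqref{eq:symmetric_parametrization_r}, where both must vanish.
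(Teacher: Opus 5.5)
Your proposal is correct and is essentially the paper's argument. Both use the weight lemma to dispose of $[\hat S^z,\hat S^\pm]$, split $[\hat S^+,\hat S^-]$ into its $\Delta N=0,\pm2$ channels, and match coefficients to obtain $ah=bc$, $bd=ae$, $ab-4ce=ab-4dh=1$ and $ce=dh$. The difference is only bookkeeping: you collect terms through elementary operator commutators and a linear-independence check on $\hat 1,\hat n_1,\hat n_2,\hat a_1^\dagger\hat a_2^\dagger,\hat a_1\hat a_2$, whereas the paper reads off the coefficients of $\ket{n_1\pm1,n_2\pm1}$ and $\ket{n_1,n_2}$ in the action on number states. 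Your version is slightly cleaner, since no square-root amplitudes have to be divided out.
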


\begin{proof}
As noted above, the relations $[\hat S^z,\hat S^\pm]=\pm \hat S^\pm$
hold automatically. In the $\Delta N=+2$ channel, equating the coefficient of the state $\ket{n_1+1,n_2+1}$
in the action of the commutator $[\hat S^+,\hat S^-]\ket{n_1,n_2}$ gives
\begin{equation}
(n_2+2)\,ah+n_1\,bc=(n_1+2)\,bc+n_2\,ah.
\label{eq:deltaNplus2_constant}
\end{equation}
Since this equality must hold for all $n_1,n_2\in\mathbb N_0$, we obtain
\begin{equation}
ah=bc.
\label{eq:constraint_ah_bc}
\end{equation}
Similarly, equating the coefficient of the state $\ket{n_1-1,n_2-1}$ yields
\begin{equation}
(n_1-1)\,ae+(n_2+1)\,db=(n_1+1)\,ae+(n_2-1)\,db,
\label{eq:deltaNminus2_constant}
\end{equation}
which implies
\begin{equation}
bd=ae.
\label{eq:constraint_bd_ae}
\end{equation}
Equating the coefficient of $\ket{n_1,n_2}$ gives
\begin{eqnarray}
&n_1(n_2+1)\,ab+n_1(n_1-1)\,ce+(n_2+1)(n_2+2)\,dh
\nonumber\\
&-n_2(n_1+1)\,ab-(n_1+1)(n_1+2)\,ce-n_2(n_2-1)\,dh
\nonumber\\
&=
n_1-n_2.
\label{eq:diagonal_constant_raw}
\end{eqnarray}
After simplification, we obtain
\begin{equation}
ab(n_1-n_2)- (4n_1+2)\,ce + (4n_2+2)\,dh = n_1-n_2.
\label{eq:diagonal_constant_simplified}
\end{equation}
Since this equality must hold for all $n_1,n_2$, comparison of the coefficients of $n_1$, $n_2$, and the constant term gives
\begin{equation}
ab-4ce=1,\qquad
ab-4dh=1,\qquad
ce=dh.
\label{eq:diagonal_constant_constraints}
\end{equation}
The last two equalities are equivalently expressed as
\[
ce=dh,
\qquad
ab-4ce=1.
\]
Together with \eqref{eq:constraint_ah_bc} and \eqref{eq:constraint_bd_ae},
these relations yield the system \eqref{eq:constant_constraints_1}.
\end{proof}

\end{document}